\newtheorem{theorem}{Theorem}
\newtheorem{lemma}{Lemma}
\title{Design and Detection of Covert  Man-in-the-Middle Cyberattacks  on Water Treatment Plants} 
\author{Victor Mattos}
\affiliation{%
  \institution{UFRJ}
  \city{Rio de Janeiro}
  \country{Brazil}
  \postcode{43017-6221}
}
\author{João Henrique Schmidt}
\affiliation{%
  \institution{UFRJ}
  \city{Rio de Janeiro}
  \country{Brazil}
  \postcode{43017-6221}
}
\author{Amit Bhaya}
\affiliation{%
  \institution{UFRJ}
  \city{Rio de Janeiro}
  \country{Brazil}
  \postcode{43017-6221}
}
\author{Alan Oliveira de Sá}
\affiliation{%
  \institution{LASIGE, Faculdade de Ciências, Universidade de Lisboa}
  \city{Lisbon}
  \country{Portugal}
  \postcode{-- }
}
\author{Daniel Sadoc Menasché}
\affiliation{%
  \institution{UFRJ}
  \city{Rio de Janeiro}
  \country{Brazil}
  \postcode{43017-6221}
}
\author{Gaurav Srivastava}
\affiliation{%
  \institution{Siemens Corporation,   755 College Rd E}
  \city{Princeton, NJ}
  \country{USA}
  \postcode{-- }
}
\thanks{This work was  partially supported  by CAPES, Finance Code 001; by FAPERJ under grant E-26/204.268/2024; and by the Conselho Nacional de Desenvolvimento Científico e Tecnológico (CNPq), Brazil, under grants 403601/2023-1,  315106/2023-9, and BPP/PQ-Sr  313335/2022-2, as well as  FCT through the LASIGE Research Unit, ref. UID/00408/2025.}
\begin{document}

\begin{abstract}
Cyberattacks targeting critical infrastructure --- such as water treatment facilities --- represent significant threats to public health, safety, and the environment. This paper introduces a systematic approach for modeling and assessing covert man-in-the-middle (MitM) attacks that leverage system identification techniques to inform the attack design. We focus on the attacker’s ability to deploy a covert controller, and we evaluate countermeasures based on the Process-Aware Stealthy Attack Detection (PASAD) anomaly detection method. Using a second-order linear time-invariant with time delay model, representative of water treatment dynamics, we design and simulate stealthy attacks. Our results highlight how factors such as system noise and inaccuracies in the attacker’s plant model influence the attack's stealthiness, underscoring the need for more robust detection strategies in industrial control environments.
\end{abstract}

\maketitle
\thispagestyle{empty}
\pagestyle{empty}


\section{Introduction}

Cyber-physical systems (CPS), such as those found in industrial control and water treatment plants, are increasingly exposed to sophisticated adversaries capable of launching stealthy cyberattacks. Among these, covert man-in-the-middle (MitM) attacks pose a particular challenge: by manipulating both control commands and sensor measurements, attackers can alter system behavior without triggering alarms. The effectiveness of such attacks is amplified when the adversary first engages in system identification (SI) to estimate the plant and controller dynamics, and then uses that information to minimize the perception of the effect caused by the attack \cite{alan}.\footnote{This is an extended version of~\cite{mattos2025design}. © ACM 2025. This is the author's version of the work. It is posted here for your personal use. Not for redistribution. The definitive Version of Record was published in the Proceedings of the 2025 Workshop on Re-design Industrial Control Systems with Security (ACM CCS Workshop), https://doi.org/10.1145/3733823.3764516.}

This paper investigates the design of covert MitM attacks that leverage SI techniques to learn the system’s input-output behavior. Once the attacker has an approximate model, they can predict the system’s response to malicious inputs and inject compensatory signals to cancel their observable effects, effectively concealing the attack~\cite{smith}. We assume that the SI phase has already been completed and focus on the second stage: designing attack strategies and using detection mechanisms to evaluate the success of the attack. In particular, we explore the limits of stealth under different model accuracies and detection thresholds and also assess the performance of advanced anomaly detectors in identifying these hidden manipulations. Note that considering the attack strategy as well as the detection method together allows the evaluation of the attack strategy and, dually, that of the detection method.


Recent incidents highlight the risks of set point manipulation in water infrastructure. In 2025, a reported incident in Norway involved attackers manipulating the reference values of a dam’s control system. At one point, the adversaries attempted to inject an implausible 999\% input, which made the attack easily detectable rather than covert.\footnote{\url{https://www.csoonline.com/article/4042449/russia-linked-european-attacks-renew-concerns-over-water-cybersecurity.html}} In November 2023, the Iranian-linked group Cyber Av3ngers compromised the Municipal Water Authority of Aliquippa, targeting a Programmable Logic Controller (PLC) that regulated water pressure and forcing operators into manual control.\footnote{\url{https://www.cisa.gov/news-events/alerts/2023/12/01/iranian-cyber-actors-cyberav3ngers-compromise-us-water-utility}}



To assess the resilience of water treatment plants against such threats, this work evaluates both the effectiveness of these covert attack strategies and the ability of detection mechanisms to identify anomalies. The proposed approach builds upon existing methods for covert controller design and attack detection~\cite{smith, pasad,kravchik2021efficient}, providing a structured approach for analyzing vulnerabilities in networked control systems (NCS).

This paper explores change-of-reference attacks \cite{alan} — demonstrating how an adversary can execute them while remaining undetected. The findings contribute to a broader understanding of cybersecurity risks in industrial control systems and highlight the need for stronger defenses against more sophisticated cyber threats.

\textbf{Related Work. } 
While a large body of research has focused on cyberattacks targeting the electrical grid~\cite{teixeira2015secure,liang2017review}, comparatively fewer studies address water treatment systems~\cite{tuptuk2021systematic}. These facilities often operate in a decentralized manner, with limited staffing, budget constraints, and outdated infrastructure~\cite{adepu2020investigation}. As a result, they lack comprehensive cybersecurity measures and remain vulnerable to adversaries capable of launching stealthy attacks. This motivates the need for systematic models and detection strategies tailored to the unique operational and economic characteristics of water treatment infrastructure.

The Secure Water Treatment (SWaT)~\cite{mathur2016swat} and Water Distribution (WADI)~\cite{li2019mad} datasets are among the most widely used benchmarks for studying cybersecurity in water infrastructure. These datasets provide detailed network traffic and physical sensor data collected from operational water treatment testbeds, and have been instrumental in advancing data-driven attack and detection techniques.

Nonetheless, as with any modeling effort, a trade-off must be made between expressivity and simplicity. In this work, we prioritize simplicity to facilitate a principled evaluation of how factors such as model estimation error and measurement noise affect both attack effectiveness and detection sensitivity.   Therefore, complementary to the SWaT and WADI  efforts, we study a less-explored but still realistic plant model introduced in~\cite{water_plant}. This model offers a simpler analytical structure that enables clearer insights into the dynamics of attack and defense. We apply the covert control scheme from~\cite{smith} in combination with the PASAD anomaly detection strategy from~\cite{pasad}, allowing for a control-theoretic analysis that highlights the interplay between system identification, attacker design, and detection performance.

\subsection*{Contributions}  

This paper makes the following three contributions to the study of cyber-physical security in water treatment systems:

\begin{enumerate}
    \item \textbf{Attack-Defense Scheme for Covert Control.}  
    We develop a structured approach that integrates the covert control architecture of Smith \cite{smith} for man-in-the-middle (MitM) attacks with a detection strategy based on the PASAD anomaly detector.
   Then, we evaluate how the accuracy of a system identification attack influences the interaction between covert MitM attacks and a detection system.
    %

    \item \textbf{Application    to Realistic Water Treatment Models.}  
    We instantiate the proposed approach using a linear time-invariant (LTI) model derived in previous work from real-world data collected at an operational water treatment facility. We design and evaluate  change-of-reference (CR)  stealthy attacks~\cite{alan}  and show how attackers can exploit system identification to remain undetected. To the best of our knowledge, this is the first work to design \textit{covert} control-based attacks tailored to water treatment infrastructure.

    \item \textbf{Detection Analysis with PASAD vs. CUSUM.}  
    We analyze the detection performance of PASAD under different attack profiles and compare it with the widely used CUSUM detector in process control applications~\cite{Montgomery_2020}. Our results demonstrate that PASAD is more effective at identifying gradual reference manipulation anomalies, particularly when the attacker leverages high-fidelity system identification. This comparison offers actionable insights into the choice of anomaly detection tools for cyber-physical systems.
\end{enumerate}

\textbf{Outline.} The remainder of the paper is organized as follows. Section~\ref{sec:wild} surveys real-world incidents in water and wastewater facilities. Section~\ref{sec:plantmodel} describes the dynamic model of the water treatment process used in this study.  
Section~\ref{sec:archi} presents the architecture of the covert man-in-the-middle attack, along with a control-theoretic formulation. Section~\ref{sec:eval} provides a detailed evaluation of attack and detection mechanisms, comparing PASAD and CUSUM under varying attack intensities and model mismatches. Finally, Section~\ref{sec:discussion} discusses assumptions and threats to validity and Section~\ref{sec:concl} concludes.

\begin{table}[t]
\centering
\caption{Summary of Shodan-related references and use cases for water infrastructure security analysis.}
\begin{tabular}{|p{0.25\textwidth}|p{0.3\textwidth}|}
\hline
\textbf{Description} & \textbf{Keywords (clickable link)} \\
\hline
General access to the Shodan platform & \href{https://www.shodan.io/dashboard}{https://www.shodan.io} \\
\hline
Search for ICS-tagged water systems   & \href{https://www.shodan.io/search?query=tag%3Aics+%22water%22+-I20100}{tag:ics ``water" -I20100} \\
\hline
 VNC  open in water-related devices & \href{https://www.shodan.io/search?query=%22water%22+-I20100+port%3A5900}{port:5900 ``water" -I20100} \\
\hline
  Telnet  open in water-related devices & \href{https://www.shodan.io/search?query=%22water%22+-I20100+port%3A23}{port:23 ``water" -I20100} \\
\hline
\end{tabular}
\label{tab:shodan_refs}
\end{table}

\section{Incidents in the Wild} \label{sec:wild}

In this section, we relate real incidents to the models considered in this work. 

\subsection{Data Collection and Curation} 
We collected data of cyberattack incidents over the past 11 years in water and wastewater facilities, leveraging artifacts such as news articles, reports, and technical papers~\cite{sikder2023deep,adepu2020investigation}. A total of 30 incidents were identified. Despite the small number of public incidents, once an incident becomes public it attracts significant attention from the newsmedia: we found an average of 52 artifacts per incident, using the DuckDuckGo search engine.\footnote{https://duckduckgo.com/} Those artifacts were semi-automatically processed with the help of Gemini v2.0 language model.



To better understand the exposure of water infrastructure to cyber threats, we conducted a reconnaissance study using Shodan~\cite{shodan} -- a search engine for internet-connected devices -- and datasets from the U.S. Environmental Protection Agency’s Facility Registry Service~\cite{epa}. By querying terms such as ``water'' and ``wastewater'', we identified approximately 8{,}000 potential water and wastewater treatment facilities across the United States. Among these, 26 facilities were explicitly labeled as Industrial Control Systems (ICS) endpoints by Shodan (Table~\ref{tab:shodan_refs}).



To further probe their attack surfaces, we searched for exposure of specific ports known to be insecure in ICS contexts. For instance, we identified at least one facility exposing port 502/tcp, used by the Modbus protocol—a communication standard widely deployed in SCADA systems that lacks encryption and authentication, making it highly susceptible to spoofing~\cite{502_is_insecure}. 
Additionally, we found 21 facilities exposing open Telnet ports, a deprecated protocol with well-known security flaws, and 47 cases where port 5900 (associated with VNC-based remote desktop access) was openly accessible (see Table~\ref{tab:shodan_refs}). These findings reinforce the notion that even basic security hygiene is lacking in many water infrastructure deployments, increasing the feasibility of adversaries gaining reconnaissance-level access and performing system identification attacks.

As another example of how open-source intelligence (OSINT) can be leveraged to assess cyber-physical risks, publicly available satellite imagery (e.g., from platforms such as Google Maps) can reveal the physical layout and geographic coordinates of water treatment facilities. When combined with network reconnaissance tools such as Shodan, this geolocation data can be correlated with Internet-facing services apparently operated by the same facility. Typical findings may include industrial control components (e.g., BACnet/IP routers), network gateways, and embedded web servers. The exposure of such services often reveals security weaknesses, such as outdated or weak encryption protocols (e.g., \texttt{TLSV1\_ALERT\_PROTOCOL\_VERSION}, self-signed SSL certificates), unprotected management interfaces (e.g., HTTP on port 8080), or ICS-specific vulnerabilities (e.g., unprotected BACnet services). Taken together, this combined OSINT approach—linking visual reconnaissance with network exposure analysis—enables precise system fingerprinting and highlights the potential for serious security breaches in critical infrastructure.



\subsection{Change-of-Reference (CR)}

 In this attack, the adversary alters the setpoint to manipulate the behavior of the system. By making gradual or minor changes, the adversary can, for example, degrade performance, increase costs, or create unsafe conditions.  

Although water treatment plants are increasingly targeted by cyberattacks, none of the 30 documented incidents analyzed in this study have led to actual contamination of the water supply. This is largely due to the ability of operators to switch to manual control in response to abnormal activity. For example, in the 2023 attack on the Municipal Water Authority of Aliquippa (Pennsylvania), the adversary successfully disabled a PLC  responsible for regulating water pressure, forcing a manual override to maintain operations. Similarly, during the 2024 incident at the Arkansas City Water Treatment Facility (Kansas), manual interventions prevented the compromise of chemical dosing processes~\cite{manual_switch_example}. These incidents, including the Florida incident discussed in the introduction, demonstrate that  attackers were able to perturb critical components inside the plants. This progression highlights an urgent need for proactive defense mechanisms, as future attacks may succeed if detection or response is delayed.


 \subsection{System Identification (SI)}


A key assumption in our work is that attackers are able to perform approximate system identification (SI) of the target infrastructure prior to launching covert attacks. By collecting and analyzing the plant's response to various control inputs, the adversary can derive an approximate model of the system’s behavior. 
 While incident reports and public sources rarely reference SI explicitly—such as defined in engineering contexts (e.g.,~\cite{matlab_sys_iden_def})—indirect evidence suggests that adversaries may indeed engage in such reconnaissance activities.

First, in several documented incidents, there is a noticeable delay between the initial system breach and the execution of the attack. For example, in the South Staffordshire Water incident, the attackers remained within the network environment for an extended period before initiating any observable disruption~\cite{lengthy_attack1,lengthy_attack2}. This dwell time is consistent with efforts to understand system behavior prior to action.

Second, some attackers have demonstrated the capability to manipulate system parameters directly, e.g., for SI purposes. In a recent campaign targeting the Texas cities of Hale Center, Lockney, and Abernathy, adversaries reportedly accessed and altered operational settings in the water control systems~\cite{direct_SI_manipulation1,direct_SI_manipulation2}. While these incidents do not constitute definitive proof of systematic SI, they suggest a growing sophistication in attacker behavior.

These findings support two plausible interpretations: either system identification is not yet a common tactic in attacks on water and wastewater (WWS) facilities, or, more likely, current forensic and monitoring techniques are insufficient to detect such preparatory activity. In either case, the increasing complexity of observed intrusions underscores the urgency of studying SI-based attacks and developing effective countermeasures.


\section{The Plant Model} \label{sec:plantmodel}

To illustrate the design and detection of covert attacks in water treatment plants, we consider the fluoridation process of a real-world drinking water treatment plant located in Peninsular Malaysia~\cite{water_plant}. Fluoride is added to treated water in carefully regulated doses, and maintaining its concentration within narrow limits is critical for public health and regulatory compliance. If these limits are violated due to a cyberattack there are well known undesirable consequences~\cite{cury_systemic_2019,iheozor-ejiofor_water_2024}. The plant data, collected over regular 15-minute intervals, was used to derive a dynamic model for the fluoride concentration in the treated water.

A second-order plus time delay (SOPTD) model was found to offer a good trade-off between prediction accuracy and analytical simplicity. The identified linear time-invariant (LTI) transfer function is given by:
\begin{equation}
P = \frac{0.93}{(1.07s + 1)(0.34s + 1)} e^{-0.45s}, \label{eq:soptd}
\end{equation}
where \( P \) represents the relationship between the fluoride dosing input and the resulting fluoride concentration at the plant outlet. The model reflects key characteristics of the physical process, including a time delay of 0.45 hours (27 minutes), a steady-state gain of 0.93 (indicating partial absorption or precipitation of fluoride), and two dominant time constants.

Using this plant model, a proportional-integral-derivative (PID) controller was designed in \cite{water_plant} via the Internal Model Control (IMC) methodology:
\begin{equation}
G_C(s) = 1.69 \left( 1 + \frac{1}{1.41s} + 0.26s \right).
\end{equation}
To compensate for the process dead time, a Smith Predictor structure was implemented. While the full plant dynamics are modeled by the SOPTD transfer function \( P \), the internal model used within the Smith Predictor was a simplified first-order plus time delay (FOPTD) approximation:
\begin{equation}
    \bar{G}(s) = \frac{0.62}{s + 0.64} e^{-0.29s}. \label{eq:approx}
\end{equation}
Approximation~\eqref{eq:approx} retains the dominant dynamics and delay characteristics of the original model, allowing for effective dead-time compensation while maintaining analytical and computational tractability within the controller design. It is important to note, however, that the original second-order model~\eqref{eq:soptd} is retained for evaluating the system’s closed-loop behavior and simulating attack and detection scenarios \cite{water_plant}.

It is worth emphasizing that, although the plant model is derived from a water treatment process, the methodology and analysis presented in this work are not restricted to this specific domain. Since the system is modeled as a LTI process, the proposed study and results are generalizable to LTI systems in general.

\begin{table}[t]
\centering
\caption{Notation used in MitM attack architecture}
\begin{tabular}{ll}
\hline
Symbol & Description \\
\hline
\( P \), \( C \) & Nominal plant and controller (LTI systems) \\
\( y_{\textrm{ref}} \) & Reference signal \\
\( y_{\textrm{m}} \) & True plant output, $y_{\textrm{m}}=y + n$  \\
\( y_{\textrm{ma}} \) & Manipulated plant output, $y_{\textrm{ma}}=y_{\textrm{m}} - \gamma$  \\
\( u_{\textrm{c}} \), \( \mu \) & Controller and covert controller outputs  \\
\( w \), \( n \) & External disturbance and measurement noise \\
\( \Pi_u \) & Attacker's identified model of the plant \\
{\( \alpha \)} & Model scaling factor (multiplicative model error \\
& due to imprecise system identification by the attacker) \\
\( \Theta \) & Covert controller designed by the attacker \\
\( \gamma_{\textrm{ref}} \) & Attacker’s reference deviation signal \\
\( \gamma \) & Predicted response of the plant to injected signal, \\
& $\gamma = \Pi_u \cdot \mu$ \\
\( \mu \) & Injected actuation computed by the attacker \\
\hline
\end{tabular}
\label{tab:notation_mitm}
\end{table}


\section{Architecture of Man-in-the-Middle Attack} \label{sec:archi}

We adopt the attack architecture proposed in~\cite{smith}, in which an adversary acts as a covert agent inserted between the plant and the nominal controller in a networked control system. In such systems, both the actuation commands issued by the controller and the sensor measurements sent from the plant are transmitted over a communication network. This exposes them to potential interception and manipulation by a man-in-the-middle (MitM) attacker. We briefly recapitulate the approach proposed in \cite{smith}, to which the reader is referred for further details.

The plant and the nominal controller are both modeled as linear time-invariant (LTI) systems (see Fig.\ref{fig:smith_blk_diag}). The reference signal is denoted by \( y_{\textrm{ref}} \), and the measured plant output, including sensor noise, is denoted by \( y_{\textrm{m}} \). The attacker manipulates the signal \( y_{\textrm{m}} \), with the objective of avoiding detection, to produce \( y_{\textrm{ma}} \), which is the signal observed by the controller. Similarly, the controller generates an actuation command \( u_{\textrm{c}} \), which the attacker modifies into the input \( u \) that is ultimately applied to the plant.

The covert agent performs a system identification procedure to estimate the plant dynamics, producing an approximate model \( \Pi_u \). It then uses this model to compute a compensating signal based on its desired attack objective. Let \( \Theta_{\textrm{ref}} \) denote the attacker's covert controller. The signal \( \gamma_{\textrm{ref}} \) encodes the attacker’s reference objective—i.e., the deviation they wish to impose—while \( \gamma \) represents the expected response of the estimated plant model to the injected signal. 
The signal \( \mu \) represents the attacker’s injected input, and \( \gamma \) is the attacker’s prediction of how the plant will respond to \( \mu \). By injecting \( \mu \) and subtracting \( \gamma \) from the measured plant output \( y_{\textrm{m}} \), the adversary ensures that \( y_{\textrm{ma}} \) appears consistent with the nominal closed-loop behavior, effectively concealing the presence of the attack.

\begin{figure}[t]
    \centering
    \includegraphics[width=0.5\linewidth]{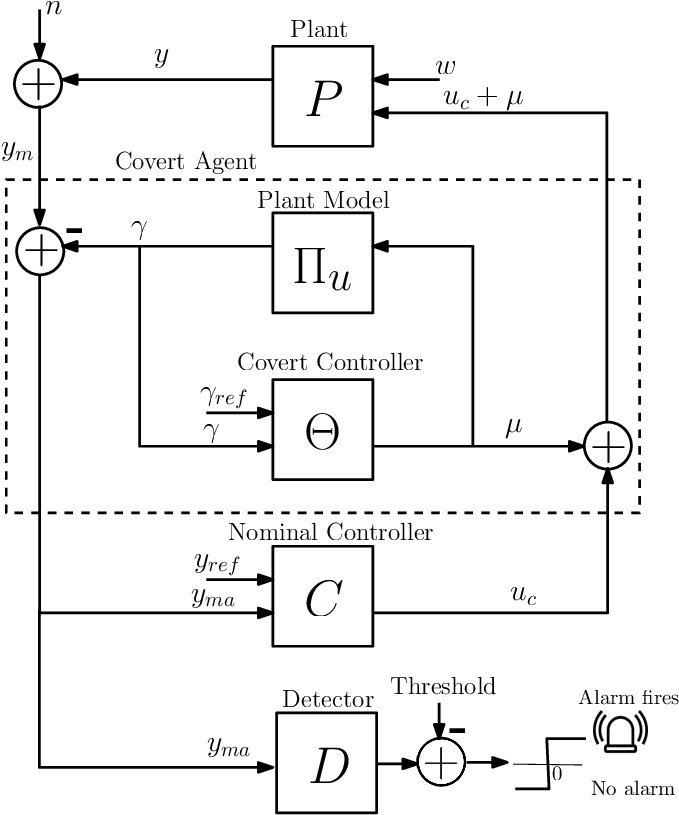}
        \caption{Architecture of the MitM attack, based on  \cite{smith}. The detector block $D$ is described in Section~\ref{sec:detection-design}.}
    \label{fig:smith_blk_diag}
\end{figure}

This architecture enables various attack strategies, such as the change-of-reference attack, in which the attacker maintains a constant additive reference \(|\gamma_{\textrm{ref}}| > 0\), steering the plant to an alternate reference value with an offset, $y_{\text{ref}} + \gamma_{\text{ref
}}$, from the original one. To remain undetected, the attacker can perform sensor spoofing, falsifying the measurement data so that the system appears to operate normally and no alarms are triggered.

It is worth noting that the possibility of compromising the confidentiality and integrity of data in protocols often used in this type of control (e.g., PROFINET) has already been demonstrated in the literature \cite{collantes2015protocols,Alsabbagh_2021}, indicating the feasibility of establishing a MitM shown in Figure \ref{fig:smith_blk_diag} from the network perspective.

\textbf{Attacker Model.} In our formulation, the covert agent is assumed to have the capability to both intercept and manipulate actuation commands as well as sensor measurements exchanged over the communication network. This corresponds to an adversary with read-and-inject access to the communication channels between the plant and the controller. In practice, such capabilities could be realized by augmenting actuators or modifying sensors.  Another possibility is to corrupt the programmable logic controllers (PLCs) used by the nominal controller \cite{smith}.

\subsection{Attack Design: Criteria for Determining Attack Parameters}

The likelihood of an attack being detected depends on four main factors that will be explored in this paper:

\begin{enumerate}
        
    \item \textbf{System Noise} – The inherent noise in the system can mask deviations caused by the attack, making it more difficult to distinguish between normal fluctuations and malicious interference.
    
    \item \textbf{Accuracy of the Attacker’s Identified Model} – A more precise model allows the attacker to better predict the system's response, reducing the chance of detection.

    \item \textbf{Intensity of the attack}
       A larger $\gamma_{ref}$ leads to more noticeable deviations from normal operation, making any model inaccuracies more apparent and increasing the risk of detection.
    \item \textbf{Dynamics of the Covert Controller} – In the case of a CR attack, the time taken for the controlled variable to reach the attacker's setpoint plays a crucial role. A slower change results in a smaller deviation at any given moment, leading to a lower detection statistic and making the attack harder to detect.
\end{enumerate}

While the first two factors are determined by the system itself, the latter two are within the attacker's control to some extent.

Given a specific attack objective, such as changing the reference by 10\%, an attacker can optimize their approach based on the detection threshold. By improving the accuracy of system identification ({\it i.e.}, $\Pi_u \rightarrow P$), the attack can be designed to remain undetected. 


\begin{figure*}[t]
    \centering
    \includegraphics[width=0.9\linewidth]{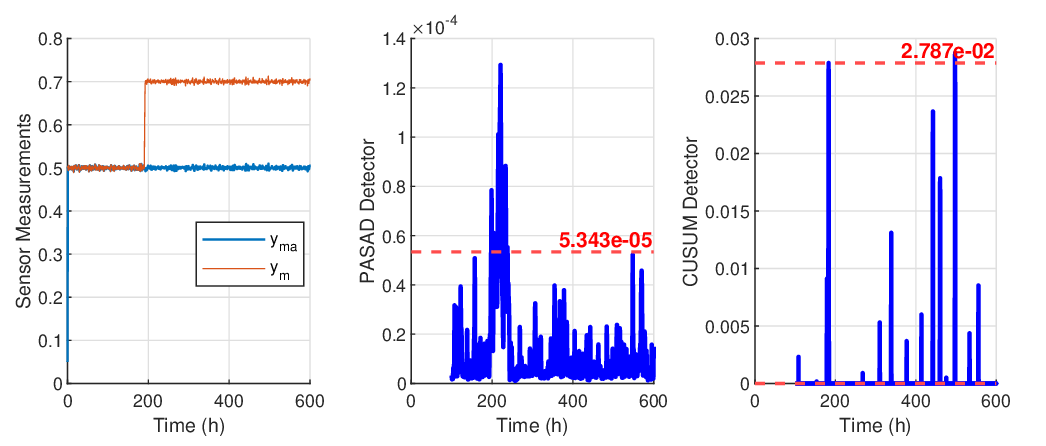} \vspace{-0.15in}
    \caption{From left to right: sensor measurements before ($y_m$) and after ($y_{ma}$) manipulation; PASAD detector; CUSUM detector, with their respective thresholds given by the red dashed line. In this example, the maximum PASAD  and CUSUM detection statistics are $1.29 \cdot 10^{-4}$ and  $2.87 \cdot 10^{-2}$, respectively.}\label{fig:pasad_cusum-thresholds} \vspace{-0.12in}
\end{figure*}
\subsection{Detection Design}\label{sec:detection-design}

Detection mechanisms exhibit differing levels of sensitivity to such covert attacks~\cite{pasqualetti2013attack}. The CUSUM detector, which relies on detecting abrupt shifts in the statistical properties of residuals, often fails to detect these attacks, as the adversary precisely cancels the visible effect of the injected signal. In contrast, PASAD~\cite{pasad} is designed to detect low-amplitude or slow-moving anomalies by analyzing lag correlations and frequency-domain features in the sensor time series. As we show in later sections, PASAD consistently outperforms CUSUM in detecting the covert strategies described above, particularly under realistic conditions with noise and model uncertainty. To our knowledge, PASAD has not been tested under such conditions.

\subsection{Model Error, Noise and Detectability}

We analyze how model mismatch and noise jointly affect detectability. To isolate the role of model error, Lemma~\ref{theo:1} considers an idealized setting without noise or disturbances. The impact of noise is then addressed in Theorem~\ref{theo:2} and further explored through simulations in the next section.

\begin{lemma}[Residual Leakage Due to Model Error] \label{theo:1}
Let the true plant be \( P \), and let the attacker use \( \Pi_u = \alpha \cdot P \), with \( \alpha > 0 \). If a covert input \( \mu \) is injected and the attacker sets \( y_{\mathrm{ma}} = y - \gamma \), where \( \gamma = \Pi_u \cdot \mu \), then under zero noise/disturbance, 
$
r = y_{\mathrm{ma}} - y_{\mathrm{nominal}} = (1 - \alpha) \cdot P \cdot \mu,$  
and
\begin{equation}
\|r\|_2 = |1 - \alpha| \cdot \|P \cdot \mu\|_2.
\end{equation}
\end{lemma}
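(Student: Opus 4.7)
The plan is to exploit the linearity of the LTI plant operator $P$ together with the superposition of the injected signal, and then reduce the norm identity to absolute homogeneity of $\|\cdot\|_2$. The proof should be short, essentially algebraic.

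First I would fix notation by writing out the three quantities that enter the residual. Under zero disturbance and noise, the plant input is $u = u_{\textrm{c}} + \mu$, so by linearity of $P$ the true output under attack is
\begin{equation}
y = P \cdot u = P \cdot u_{\textrm{c}} + P \cdot \mu .
\end{equation}
The nominal output, corresponding to the attack-free scenario with $\mu = 0$ (and identical reference and initial conditions), is $y_{\textrm{nominal}} = P \cdot u_{\textrm{c}}$. Since the attacker uses the model $\Pi_u = \alpha \cdot P$, the subtracted quantity is $\gamma = \Pi_u \cdot \mu = \alpha \cdot P \cdot \mu$.

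Next I would substitute these three expressions into the definition $r = y_{\textrm{ma}} - y_{\textrm{nominal}} = (y - \gamma) - y_{\textrm{nominal}}$. The $P \cdot u_{\textrm{c}}$ terms cancel, leaving $r = P \cdot \mu - \alpha \cdot P \cdot \mu = (1 - \alpha)\, P \cdot \mu$, which is the first claimed identity. Taking the $L_2$ norm and pulling the scalar $(1-\alpha)$ outside by absolute homogeneity yields $\|r\|_2 = |1 - \alpha| \cdot \|P \cdot \mu\|_2$, which is the second claim.

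The only subtlety — and the step I would flag explicitly — is the implicit assumption that $u_{\textrm{c}}$ is the same signal in both the attacked and nominal scenarios over the horizon considered; strictly speaking, the controller sees $y_{\textrm{ma}}$ rather than $y$, so its output in the attacked case could differ from the nominal one in transient. However, by design of the covert architecture, $y_{\textrm{ma}} = y_{\textrm{nominal}}$ when $\alpha = 1$, and for general $\alpha$ the lemma should be read as an open-loop (signal-level) residual identity in which $u_{\textrm{c}}$ is treated as a fixed exogenous signal, isolating the sole effect of the model mismatch $\alpha$. I would state this framing explicitly at the top of the proof so that the cancellation of $P \cdot u_{\textrm{c}}$ is unambiguous; the rest is routine.
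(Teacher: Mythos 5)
Your proposal is correct and follows essentially the same superposition argument as the paper, which simply starts from the decomposition $y = y_{\mathrm{nominal}} + P\cdot\mu$ that you derive from $y = P(u_{\mathrm{c}}+\mu)$. Your explicit flagging of the implicit assumption that $u_{\mathrm{c}}$ is treated as the same exogenous signal in the attacked and nominal scenarios is a point the paper's proof leaves unstated, and is a reasonable clarification rather than a divergence in method.
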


\begin{proof}
The plant output is \( y = y_{\mathrm{nominal}} + P \cdot \mu \), and the attacker predicts \( \gamma = \alpha \cdot P \cdot \mu \). Thus,
\begin{equation}
y_{\mathrm{ma}} = y - \gamma = y_{\mathrm{nominal}} + (1 - \alpha) \cdot P \cdot \mu,
\end{equation}
yielding the stated residual.
\end{proof}

While the above lemma shows that model mismatch generates a detectable residual, it does not consider the impact of noise. However, noise introduces ambiguity—when residuals are small enough, they can be masked by the nominal noise envelope.

We now consider how noise affects detectability. 
To build intuition, consider a detector that raises an alarm if the residual signal exceeds a threshold \( \delta > 0 \), and let the observed residual be \( r + n \), where \( n \) is additive noise. By the triangle inequality,
\begin{equation}
    \|r + n\|_2 \leq \|r\|_2 + \|n\|_2.
\end{equation}
So, a sufficient condition for the attack to remain undetected is \begin{equation}
\|r\|_2 < \delta - \|n\|_2. \label{eq:condition}
\end{equation}
If the residual induced by the model mismatch is small enough, it can be hidden below the threshold by the noise. Although \( \|n\|_2 \) is a random quantity,   condition~\eqref{eq:condition} captures a basic deterministic mechanism that enables probabilistic stealth.  Next, we establish a probabilistic bound: 
\begin{theorem}[Probabilistic Stealth  Success Under Noise] \label{theo:2}
Let the residual under attack be \( r = (1 - \alpha) P \cdot \mu \), and suppose the additive noise \( n \sim \mathcal{N}(0, \sigma^2 I_d) \) is i.i.d. Gaussian of dimension \( d \). If the detector raises an alarm when \( \|r + n\|_2 > \delta \), then the probability that the attack remains undetected satisfies
\[
\mathbb{P}(\|r + n\|_2 \leq \delta) \geq 1 - \exp\left(-\frac{(\delta - \|r\|_2 - \sqrt{d} \sigma)^2}{2 \sigma^2}\right),
\]
for any \( \delta > \|r\|_2 + \sqrt{d} \sigma =\|(1-\alpha)P \cdot \mu\|_2 + \sqrt{d} \sigma  \).
\end{theorem}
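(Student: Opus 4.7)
The plan is to reduce the two-sided event $\{\|r+n\|_2 \leq \delta\}$ to a tail event on $\|n\|_2$ alone, and then to control that tail via Gaussian concentration. First, by the triangle inequality,
\begin{equation}
\|r+n\|_2 \;\leq\; \|r\|_2 + \|n\|_2,
\end{equation}
so the deterministic inclusion $\{\|n\|_2 \leq \delta - \|r\|_2\} \subseteq \{\|r+n\|_2 \leq \delta\}$ holds whenever $\delta > \|r\|_2$. Taking probabilities yields
\begin{equation}
\mathbb{P}(\|r+n\|_2 \leq \delta) \;\geq\; \mathbb{P}(\|n\|_2 \leq \delta - \|r\|_2) \;=\; 1 - \mathbb{P}(\|n\|_2 > \delta - \|r\|_2),
\end{equation}
so the task reduces to upper-bounding the right tail of $\|n\|_2$.

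Next, I would control $\mathbb{E}\|n\|_2$ by Jensen's inequality: $\mathbb{E}\|n\|_2 \leq \sqrt{\mathbb{E}\|n\|_2^2} = \sqrt{d}\,\sigma$, since $n \sim \mathcal{N}(0,\sigma^2 I_d)$. Then I would invoke the standard Gaussian Lipschitz concentration inequality (Borell--Tsirelson--Ibragimov--Sudakov): for any $1$-Lipschitz function $f:\mathbb{R}^d\to\mathbb{R}$ and $n\sim\mathcal{N}(0,\sigma^2 I_d)$,
\begin{equation}
\mathbb{P}\bigl(f(n) \geq \mathbb{E}[f(n)] + t\bigr) \;\leq\; \exp\!\left(-\frac{t^2}{2\sigma^2}\right), \qquad t \geq 0.
\end{equation}
Applying this with $f(n) = \|n\|_2$, which is indeed $1$-Lipschitz by the triangle inequality, and combining with the Jensen bound on the mean, I obtain
\begin{equation}
\mathbb{P}(\|n\|_2 \geq \sqrt{d}\,\sigma + t) \;\leq\; \exp\!\left(-\frac{t^2}{2\sigma^2}\right).
\end{equation}

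Finally, I would choose $t = \delta - \|r\|_2 - \sqrt{d}\,\sigma$, which is nonnegative precisely under the hypothesis $\delta > \|r\|_2 + \sqrt{d}\,\sigma$ stated in the theorem. Substituting back yields
\begin{equation}
\mathbb{P}(\|n\|_2 > \delta - \|r\|_2) \;\leq\; \exp\!\left(-\frac{(\delta - \|r\|_2 - \sqrt{d}\,\sigma)^2}{2\sigma^2}\right),
\end{equation}
and combining with the earlier inclusion gives the claimed lower bound, with $\|r\|_2 = \|(1-\alpha)P\cdot\mu\|_2$ supplied by Lemma~\ref{theo:1}.

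The main obstacle is justifying the Gaussian concentration step cleanly: one must either cite the Borell--TIS inequality as a black box or, if a self-contained argument is desired, derive it via the Gaussian log-Sobolev / Herbst argument, which is the only nontrivial piece—the triangle inequality and Jensen bound are routine. A secondary subtlety is the role of $d$: for the bound to be meaningful the detection threshold $\delta$ must exceed the typical noise magnitude $\sqrt{d}\,\sigma$ by a safety margin strictly larger than the model-error residual $\|r\|_2$, which is exactly the stealth budget the attacker can spend by improving system identification ($\alpha \to 1$).
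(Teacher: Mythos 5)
Your proposal is correct and follows essentially the same route as the paper's proof: both bound $\mathbb{E}\|n\|_2$ by $\sqrt{d}\,\sigma$ and then invoke sub-Gaussian (Lipschitz) concentration of the norm to get the stated tail. The paper's version is a terse sketch that says ``$X$ concentrates around $\|r\|_2+\sqrt{d}\,\sigma$'' and cites ``a sub-Gaussian tail bound'' without naming it; your write-up supplies exactly the missing details --- the triangle-inequality reduction to the tail of $\|n\|_2$, the Jensen bound on the mean, and the Borell--TIS inequality as the precise concentration tool --- so it is a faithful, more rigorous rendering of the same argument.
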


\begin{proof}
Let \( X = \|r + n\|_2 \). Since \( n \sim \mathcal{N}(0, \sigma^2 I_d) \), the expected norm satisfies \( \mathbb{E}[\|n\|_2] = O(\sqrt{d} \sigma \)), and \( X \) concentrates around \( \|r\|_2 + \sqrt{d} \sigma \). Applying a sub-Gaussian tail bound yields
$\mathbb{P}(X > \delta) \leq \exp\left(-{(\delta - \|r\|_2 - \sqrt{d} \sigma)^2}/({2 \sigma^2})\right).$ 
Taking the complement gives the desired bound.
\end{proof}

In summary, detection is fundamentally governed by the trade-off between model mismatch and noise level~\cite{oliveira2020identification}. When residuals are small relative to the noise-induced threshold, the attack becomes increasingly difficult to detect.
%
%
%
In the next section, we explore this behavior empirically by varying both the model error and the noise standard deviation in a grid of simulations.

\section{Attack and Detection Evaluation} \label{sec:eval}

We now evaluate the effectiveness of the PASAD and CUSUM detectors against the covert attack architecture described in the previous section. The analysis is based on a comprehensive simulation study using the plant model presented in Section~III, and includes change-of-reference (CR) attack scenarios. Two key parameters are varied: the attack intensity, represented by \( \gamma_{\textrm{ref}} \), and the model estimation error, modeled as a multiplicative deviation from the true plant coefficients, implying that the model error occurs in the same frequency band used for the controller design. The delay is assumed to be known because it is easy to estimate.
 

We conduct 10{,}201 simulations across a grid of 101 values of \( \gamma_{\textrm{ref}} \in [-0.5, 0.5] \), corresponding to \(-100\%\) to \(100\%\) CR. Additionally, 101 multiplicative model errors $\alpha$ were considered within the range \([0.9, 1.1]\), representing up to \(\pm 10\%\) parametric uncertainty. The system operated with a sample time of 0.1, and the white noise power was fixed at \(10^{-9}\). To ensure reproducibility, all simulations used a fixed noise seed. For simplicity, the covert controller was set equal to the nominal controller. 
We assumed $w=0$, and the reference for \(C\) was set as \( y_{\textrm{ref}} = 0.5 \).  

Regarding the parameters of the PASAD algorithm, training was performed using the first 1{,}000 samples after reaching steady-state, with the attack starting at sample 2{,}000. The parameters were set as \( r = 26 \) and \( L = N/2 \).  
For the CUSUM detector, the reference value was defined as \( k = 0.3 \cdot \sigma_s \), where \( \sigma_s \) is the standard deviation of sensor measurements during the training phase (see Table~\ref{tab:simulpar}).

For each simulation, the detectors are evaluated on the same measurement stream. Detection thresholds for both PASAD and CUSUM are chosen slightly above the maximum statistic observed in attack-free runs, as shown in Figure \ref{fig:pasad_cusum-thresholds}, which guarantees a perfect true negative rate (TNR = 1) and zero false positives (FPR = 0).

\begin{table}[t]
\centering
\caption{Simulation and detector parameters.} \label{tab:simulpar}
\resizebox{0.5\textwidth}{!}{%
\begin{tabular}{ll}
\hline
\textbf{Parameter} & \textbf{Value / Description} \\
\hline
Simulations & \(10{,}201\) (\(101\times101\) grid) \\
Change-of-ref.\ \(\gamma_{\mathrm{ref}}\) & \([-0.5,\,0.5]\) (\(-100\%\) to \(100\%\)) \\
Model scaling error & \\
(multiplicative error) \(\alpha\) & \([0.9,\,1.1]\) (\(\pm10\%\) error) \\
Sample time & \(0.1\) \\
White noise power & \(10^{-9}\) \\
Noise seed & Fixed (reproducible) \\
Disturbance \(w\) & \(0\) (neglected) \\
Reference \(y_{\mathrm{ref}}\) & \(0.5\) \\
Covert controller & Same as nominal \\
\hline
\multicolumn{2}{c}{\textbf{PASAD parameters}} \\
\hline
Training samples & First \(1{,}000\) (after steady state) \\
Attack start & Sample \(2{,}000\) \\
Lag \(r\) & \(26\) \\
Embedding length \(L\) & \(N/2\) \\
\hline
\multicolumn{2}{c}{\textbf{CUSUM parameters}} \\
\hline
Reference value \(k\) & \(0.3\,\sigma_s\) (sensor std.\ during training) \\
\hline
\end{tabular}%
}
\label{tab:sim_params}
\end{table}

Figure~\ref{fig:pasad_cusum-thresholds} illustrates the impact of the attack on sensor readings (Fig.~\ref{fig:pasad_cusum-thresholds} (left)) and the corresponding response of two detection mechanisms. The first subplot shows the sensor measurements before manipulation ($y_{\textrm{m}}$) and after manipulation ($y_{\textrm{ma}}$), highlighting how the attacker’s compensation effectively conceals the deviation from the reference. The second and third subplots display the detection statistics  from PASAD and CUSUM, respectively, along with their thresholds (in red). Despite the presence of model mismatch and noise, the PASAD detector exhibits a clear excursion beyond the threshold shortly after the attack begins, indicating successful detection.  In contrast, CUSUM also detects the attack, but with a significantly longer delay and a much smaller margin — its maximum detection statistic remains much closer to the threshold compared to PASAD. This underscores PASAD's heightened sensitivity to subtle and gradual deviations, which traditional detectors like CUSUM often fail to detect promptly.

\subsection{Detection Performance under Model Mismatch}

In this scenario, the first set of experiments evaluates how detection performance degrades or improves depending on the combination of attack intensity and model mismatch. Figures 
\ref{fig:cusum-heatmap-error} (resp.~ ~\ref{fig:pasad-heatmap}) show the maximum CUSUM (PASAD) detection statistic across the parameter grid. As the model error increases, the attacker’s ability to compensate for the injected signal deteriorates, increasing the likelihood of detection.

\begin{figure}[htb]
    \centering
    \includegraphics[width=0.5\linewidth]{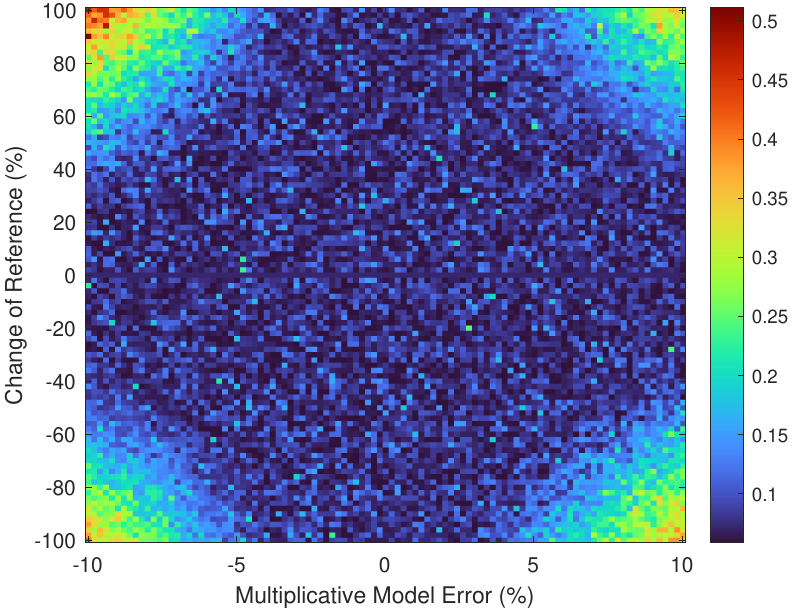}
    \caption{Maximum CUSUM detection statistic as a function of change-of-reference and multiplicative model error.}
    \label{fig:cusum-heatmap-error}
\end{figure}

\begin{figure}[htb]
    \centering
\includegraphics[width=0.5\linewidth]{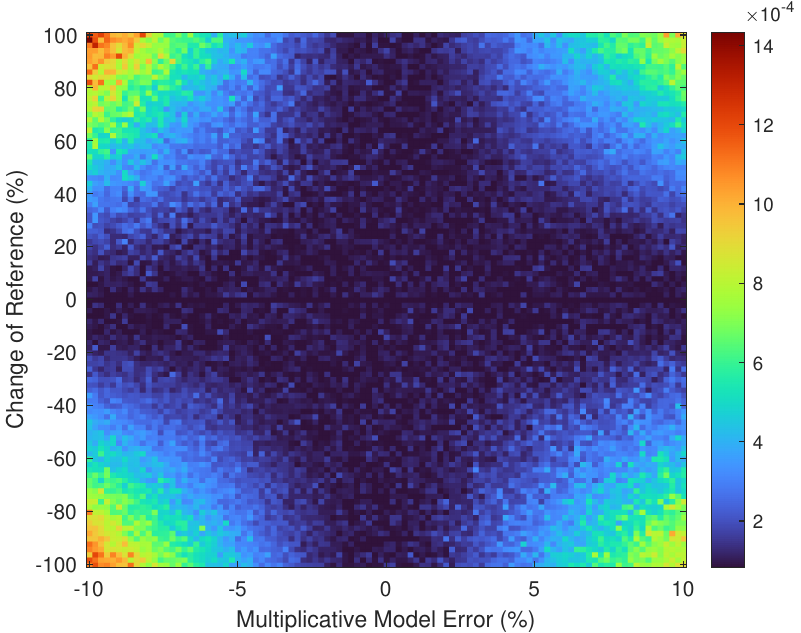}
    \caption{Maximum PASAD detection statistic as a function of change-of-reference and multiplicative model error.}
    \label{fig:pasad-heatmap} \vspace{-0.1in}
\end{figure}

 While CUSUM is more sensitive than PASAD to small model errors in some regions, it suffers from irregular behavior, particularly under noise. This results in the patchy, ``Swiss-cheese"-like appearance of the CUSUM heatmap (Fig.~\ref{fig:cusum-heatmap-error}), highlighting a lack of robustness.

\begin{figure}[htb]
    \centering    \includegraphics[width=0.5\linewidth]{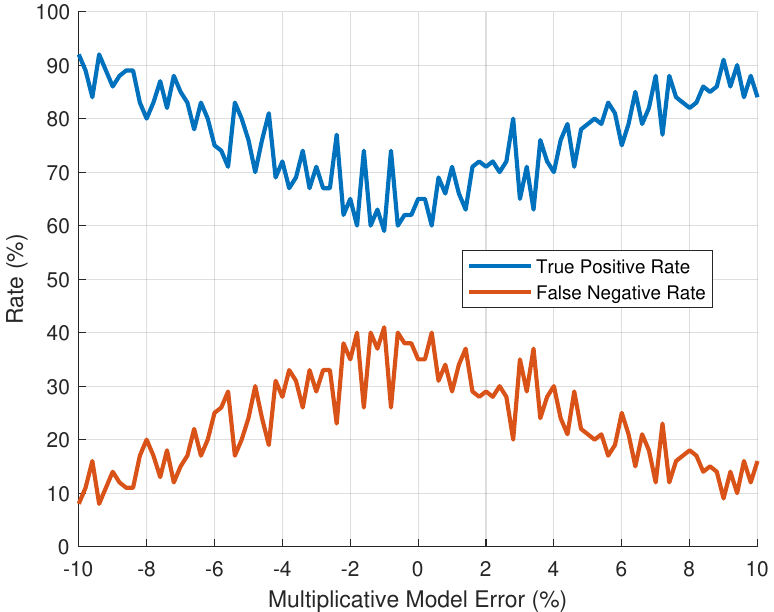}
    \caption{CUSUM classification results. Dual-threshold detection using both positive and negative excursions.}
    \label{fig:cusum-counts} \vspace{-0.1in}
\end{figure}

\begin{figure}[htb]
    \centering
\includegraphics[width=0.5\linewidth]{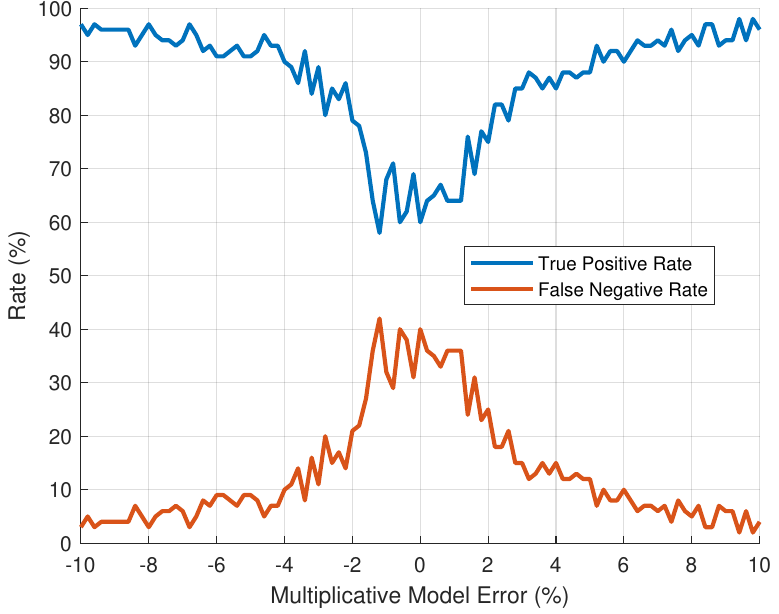}
    \caption{PASAD classification results: proportion of true positives (TP) and false negatives (FN) under varying attack intensities and model errors.} \vspace{-0.1in}
    \label{fig:pasad-counts}
\end{figure}

Figure \ref{fig:cusum-counts} (resp. ~\ref{fig:pasad-counts}) shows the classification outcomes (True Positives and False Negatives); once again CUSUM displays more irregular behavior under noise than PASAD. 

\subsection{Detection Performance under Noise}\label{sec:detection-noise}
We repeat the same simulation process while keeping all parameters unchanged. However, instead of varying the multiplicative model error across 101 values, we now vary the noise power over a logarithmic scale in the range \([10^{-7}, 10^{-11}]\), using 101 samples. The multiplicative model error is fixed at~5\%.  The results for PASAD and CUSUM are reported in Figures~\ref{fig:pasad-detection-noise} and~\ref{fig:cusum-detection-noise}, respectively. 

    \begin{figure}[ht]
    \centering
\includegraphics[width=0.5\linewidth]{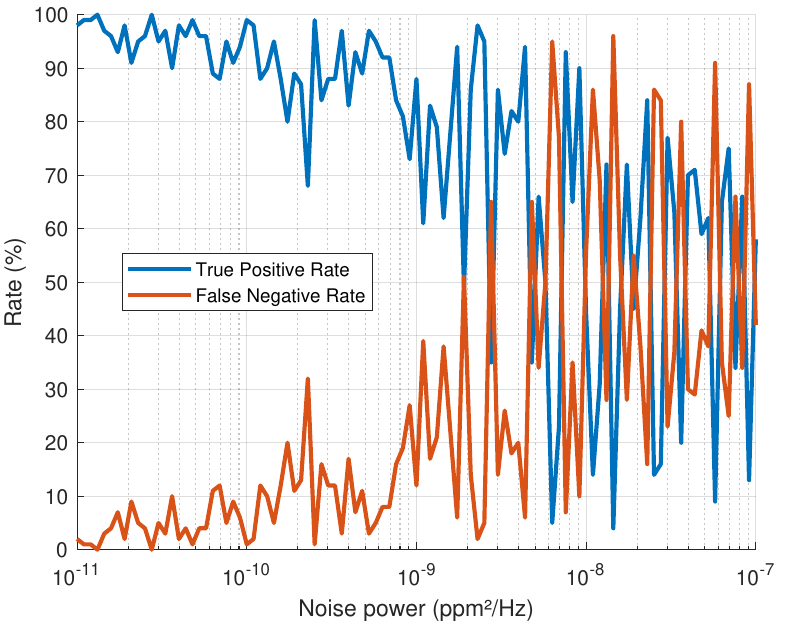}
    \caption{ PASAD classification results: proportion of true positives (TP) and
false negatives (FN) under varying attack intensities and noise power. The noise power axis represents the power spectral density, up to a constant factor.}
    \label{fig:pasad-detection-noise} \vspace{-0.1in}
\end{figure}

\begin{figure}[ht]
    \centering
    \includegraphics[width=0.5\linewidth]{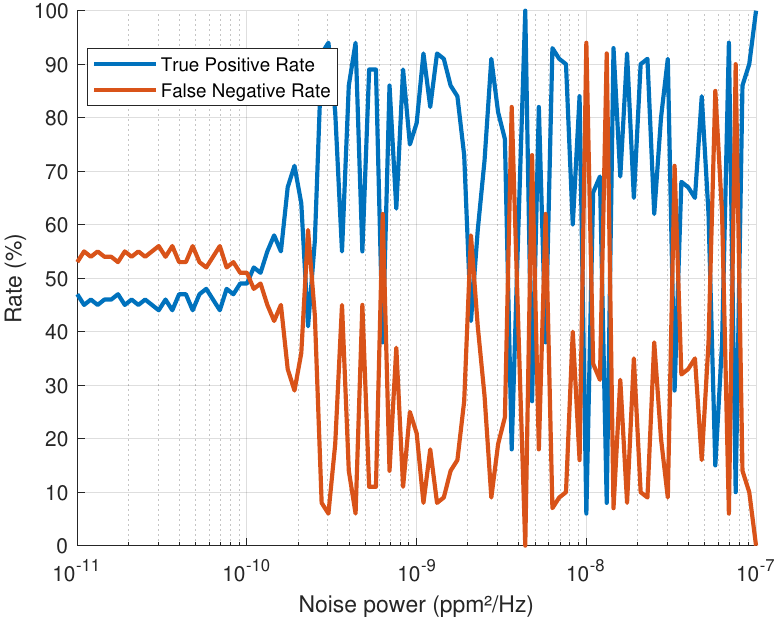}
    \caption{CUSUM classification results: proportion of true positives (TP) and
false negatives (FN) under varying attack intensities and noise power.  The noise power axis represents the power spectral density, up to a constant factor.}
    \label{fig:cusum-detection-noise}  \vspace{-0.1in} 
\end{figure}

Figures~\ref{fig:pasad-detection-noise} and~\ref{fig:cusum-detection-noise} confirm the observation from Section~\ref{sec:detection-noise} that PASAD is more robust to noise than CUSUM. Notably, CUSUM performs poorly at the start of Figure~\ref{fig:cusum-detection-noise}, as it requires recalibration of the \(k\) parameter for each noise level, whereas PASAD only needs threshold adjustment.

\subsection{Relation to Analytical Results}

The empirical findings in this section align with the theoretical predictions from Lemma~\ref{theo:1} and Theorem~\ref{theo:2}. Specifically, the observed scaling of the PASAD and CUSUM detection statistics  with respect to model mismatch is consistent with the residual norm bound $\|r\|_2 = |1 - \alpha| \cdot \|P \cdot \mu\|_2$ from Lemma~\ref{theo:1}. The approximate symmetry in the classification outcomes with respect to the model mismatch parameter $\alpha = 1$, as seen in Figures~\ref{fig:cusum-counts} and~\ref{fig:pasad-counts}, reflects the dependence of the residual on the absolute deviation $|1 - \alpha|$. Additionally, the heatmaps in Figures~\ref{fig:cusum-heatmap-error} and~\ref{fig:pasad-heatmap}, together with the noise-dependent curves in Figures~\ref{fig:pasad-detection-noise} and ~\ref{fig:cusum-detection-noise}, illustrate the exponential degradation in detectability as noise increases, in agreement with the probabilistic bound established in Theorem~\ref{theo:2}.

\subsection{PASAD vs CUSUM}

Overall, PASAD offers more consistent performance and broader coverage in detecting covert attacks, especially those characterized by slow or structured deviations. While CUSUM can outperform PASAD in specific regions of low model error and high signal-to-noise ratio, it is less reliable when the attack becomes more adaptive or the model uncertainty increases.

Notably, the PASAD detection landscape is approximately convex in the \( (\gamma_{\textrm{ref}}, 
 \text{model error}) \)-space, suggesting symmetrical growth in PASAD  detection statistics  with stronger attacks and greater mismatch. CUSUM, by contrast, shows less convex (i.e., flatter) behavior with respect to model error, thus being less sensitive to the magnitude of the injected attack.

\section{Discussion} \label{sec:discussion}

While our study presents a structured approach to evaluate covert MitM attacks and their detection in water treatment systems, several limitations must be acknowledged.

\textbf{Simplified system model and identification error. }
 We intentionally adopted a second-order plus time delay (SOPTD) model for the water treatment process to allow analytical tractability and controlled experimentation across noise and model mismatch parameters. However, this choice limits generality. Real-world water treatment systems may exhibit nonlinear, time-varying, and multi-variable dynamics, with more complex sources of noise and delay. Furthermore, the model mismatch was simplified as a scalar multiplicative deviation, which—while helpful for isolating analytical insights—does not reflect the richer structures typically encountered in system identification errors, such as unmodeled dynamics or structured parametric uncertainty. We are currently considering more expressive  system identification models and uncertainty representations to better reflect operational settings.

\textbf{Scope of attack and detection strategies. }
 Our approach assumes a single-stage covert change-of-reference (CR) attack, targeting a scalar process with a fixed control architecture. More sophisticated attack vectors—such as replay attacks, dynamic reference modulation, or state estimation poisoning—were not explored here. In addition,  although PASAD was shown to outperform CUSUM in our scenarios, the theoretical results in Lemma~\ref{theo:1} and Theorem~\ref{theo:2} focus on residual behavior under model error and noise but do not explicitly explain PASAD’s advantage over CUSUM. A formal link between detector characteristics and the residual's frequency-domain or correlation properties remains an open question.

\textbf{Application context. }  Further work is needed to extend and validate the approach across multi-stage treatment processes, varying control schemes, and additional  blueprints~\cite{siemens:pcs7_cyber_blueprint_water_2023}. Moreover, connecting attack dynamics to safety-critical outcomes, like regulatory threshold violations or human health impacts, would improve the specificity of the contribution.

\vspace{-0.1in}

 \section{Conclusion} \label{sec:concl}

This paper introduced a structured approach for designing and detecting covert MitM  attacks in water treatment plants, focusing on attacks guided by system identification. Using a simplified yet realistic model of a fluoridation process and the covert control architecture of~\cite{smith}, we showed how an adversary can stealthily manipulate control and sensor signals to execute undetected change-of-reference attacks. To counter such threats, we evaluated the PASAD anomaly detector against the traditional CUSUM approach. Simulations revealed that PASAD provides more robust detection, particularly under model mismatch and sensor noise, whereas CUSUM is more sensitive to parameter tuning and less effective against gradual or structured attacks.



We surveyed real incidents and public vulnerabilities, finding that water treatment systems remain  highly exposed to cyberattacks. While manual intervention has prevented contamination so far, attackers can already disrupt internal operations, and future attacks may succeed   if detection and response mechanisms are not improved.


As future work, we plan to extend this approach to accommodate more complex multi-stage processes and data-driven attack synthesis using benchmarks like the SWaT dataset~\cite{mathur2016swat}. Furthermore, formally exploring the scaling laws of  covertness under Neyman-Pearson detectors~\cite{ramtin2021fundamental}, investigating more sophisticated machine learning-based detectors  and exploring game-theoretic formulations for adaptive attacker-detector dynamics also remain promising directions for future work.
We are currently   implementing a proof-of-concept inspired by this paper  on a real PROFINET testbed, incorporating realistic network considerations to help bridge the gap between theory and practice.

 \textbf{Reproducibility and Artifact Availability. }  
 To foster  reproducibility, we   provide  all Matlab code, simulation scripts, and figure-generation tools used in this study.\footnote{\href{https://github.com/vctrmattos/mitm_water_codes}{\texttt{https://github.com/vctrmattos/mitm\_water\_codes}}}

\bibliographystyle{ACM-Reference-Format}

\bibliography{bibliogra}
\end{document}